\documentclass[12pt]{article}

\usepackage{amsthm,amsmath,amssymb,bbm,bm}
\usepackage{natbib}
\usepackage{multirow}
\usepackage[pdftex]{graphicx}
\usepackage{subfigure}
\usepackage{wrapfig}
\usepackage{tikz}
\usepackage{array}
\usepackage{url}

\usepackage{algorithmic}
\usepackage[linesnumbered, ruled, vlined]{algorithm2e}

\usepackage{mathrsfs}
\usepackage{dsfont}
\usepackage{titling}

\usepackage{relsize}
\usepackage{rotating}
\usepackage{enumitem}
\usepackage{float}
\usepackage{xcolor}
%
\usepackage{setspace}



{\textwidth=6.5in}


\newtheorem{example}{Example}
\newtheorem{definition}{Definition}[section]

\newtheorem{theorem}{Theorem}[section]
\newtheorem{lemma}{Lemma}[section] 
\newtheorem{proposition}{Proposition}[section]

\newcommand{\IMs}{IMs\xspace}
\newcommand{\IM}{IM\xspace}
\newcommand{\CC}{CC\xspace}
\newcommand{\CCs}{CCs\xspace}
\newcommand{\CD}{CD\xspace}
\newcommand{\CDs}{CDs\xspace}
\newcommand{\DS}{DS\xspace}
\newcommand{\GFD}{GFD\xspace}
\newcommand{\GFI}{GFI\xspace}

\newcommand{\calS}{\mathcal{S}}
\newcommand{\bel}{\operatorname{bel}_y}
\newcommand{\pl}{\operatorname{pl}_y}
\newcommand{\fid}{\operatorname{fid}_y}
\newcommand{\cc}{\operatorname{cc}_y}
\newcommand{\manifold}[1]{\mathcal M_{y,#1}}

\begin{document}


\title{\vspace{-3cm}  {\Large Demystifying Inferential Models: A Fiducial Perspective}}

\author{
Yifan Cui\thanks{Department of Statistics and Data Science, National University of Singapore},
Jan Hannig\thanks{Department of Statistics and Operations Research, University of North Carolina at Chapel Hill}}
\date{}

\maketitle
\vspace{-2.2cm}
\begin{abstract}
Inferential models have recently gained in popularity for valid uncertainty quantification. In this paper, we investigate inferential models by exploring relationships between inferential models, fiducial inference, and confidence curves. 
In short, we argue that from a certain point of view, inferential models can be viewed as fiducial distribution based confidence curves.  We show that all probabilistic uncertainty quantification of inferential models is based on a collection of sets we name principle sets and principle assertions.
\end{abstract}

\noindent {\bf keywords}
Confidence curve, Confidence distribution, Dempster-Shafer theory, Fiducial inference, Inferential model

\section{Introduction}

Inferential models (\IMs) \citep{martin2015inferential} 
 are one of the great statistical innovations of the 2010s. \IMs brought a thoroughly novel idea into the foundations of statistics by formalizing a way to assign epistemic probabilities to events that have guaranteed frequentist interpretation, called validity \citep{MartinLiu2013a,MartinLiu2013b,MartinLiu2013c,martin2015inferential,ryan2017IM}. 
When studying \IMs one cannot but notice that \IMs share many similarities with mathematical mechanics of fiducial inference.

R.A.~Fisher introduced the idea of fiducial probability \citep{Fisher1930}
as a potential replacement of the Bayesian posterior distribution. Although he discussed fiducial inference in several subsequent papers, there appears to be no rigorous universally  accepted definition of a fiducial distribution for a vector parameter. 
The basic idea of the fiducial argument is switching the role of data and parameters to introduce influentially meaningful distribution on the parameter space that summarizes our knowledge about the unknown parameter without introducing any prior information.

There are various related formalization of this idea such as Dempster-Shafer (DS) theory \citep{Dempster2008,  EdlefsenLiuDempster2009, shafer1976mathematical}, and generalized fiducial inference (GFI) \citep{Hannig2009,hannig2016generalized,Cui2019}
that has been applied to many modern statistical problems
 \citep{CisewskiHannig2012,WandlerHannig2012b, lai2015,hannig2016generalized,liu2016generalized,liu2017generalized,williams2019,williams2019b,Cui2019,Neupert2019,cui2020fiducial,cui2021unified}.
Even objective Bayesian inference, which aims at finding non-subjective model based priors can be seen as addressing the same basic question. Examples of recent breakthroughs related to reference prior and model selection are  \cite{BayarriEtAl2012, BergerBernardoSun2009, BergerBernardoSun2012}. There are many more references that interested readers can find in the review article \cite{hannig2016generalized}.

Another important idea in statistical foundations that received considerable interest in the past decade is confidence distribution (CD), that some viewed as ``the Neymanian interpretation of Fisher's fiducial distributions'' \citep{schweder2016confidence}.
\CD refers to a data-dependent distribution function that can represent confidence intervals (regions) of all levels for a parameter of interest \citep{XieSingh2013, schweder2016confidence}.
The initial idea can be traced back to early 20th century \citep{neymann1941,cox1958}, but \CDs have not received much attentions until the recent surge of interest in the research of \CD and its applications \citep{Efron1998,SchwederHjort2002,SchwederHjort2003,schweder2016confidence,XieSinghStrawderman2011,SinghXieStrawderman2005,singhxiestrawderman2007,XieSingh2013,luo2021leveraging,lawless2005,tian2011,yang2016,liu2014,Liu2015MultivariateMO,cui2021confidence}.
Heuristically speaking \CD is a function of both the parameter and the sample which satisfies two conditions. The first condition basically states that for any fixed sample, a
\CD must be a distribution function on the parameter space. The second condition essentially places a restriction to this sample-dependent distribution function such that the corresponding inference has desired frequentist properties.

A confidence curve (CC) was introduced by Birnbaum \citep{Birnbaum1961}. It was originally viewed as a useful graphical tool for visualizing CDs by converting, the distribution function $H_y(\theta)$ to the 
\begin{equation}\label{eq:initialCC}
    \cc(\theta)= 2|H_n(\theta)-0.5|.
\end{equation}
On a plot of a \CC defined in \eqref{eq:initialCC}, a line across the $y$-axis of the significance level $\alpha$, for any $0 < \alpha < 1$, intersects with the confidence curve at two points, that correspond to end points of the $\alpha$ level, equal tailed, two-sided confidence interval for $\theta$. In addition, the minimum of a confidence curve is the median of the \CD which can serve as a point estimator.
This idea has been further generalized beyond the equal-tailed univariate confidence intervals by Schweder and Hjort \citep{schweder2016confidence}.
They require that all $\{\theta : \cc(\theta)\leq \alpha\}$
form $\alpha$ confidence set, but these sets do not have to be intervals. The main advantage that \CC has over \CD is that it indicates the shape of the confidence set for each possible data $y$. 

The rest of the paper is organized as follows. 
In Section~\ref{s:IMdef}, we explain mathematical definition of \IMs, present a simple example, and discuss some basic properties of \IMs.
In Section~\ref{s:fiducialIM}, we explore relationships between \GFI and \IM. In particular, we investigate sets for which fiducial probability and \IM belief coincide. 
The concept of \CCs does not necessarily tell us how to obtain them. In Section~\ref{s:ccIM}, we show that \IMs can be viewed as a tool for obtaining \CCs. However they are not the only such tool, e.g., higher order likelihood inference
\citep{FraserFraserStaicu2010,
Fraser2004,
Fraser2011,
FraserNaderi2008,
Fraser:2005tc,
FraserReidMarrasYi2010}. 
Section~\ref{s:discussion} concludes by discussing our takes from the theorems proved in this paper. In particular we 
claim that \IMs can be, in some sense, viewed as fiducial distribution based confidence curves.

While we are using \GFI terminology in this paper, the same mathematical results can be formulated using the \DS theory. In particular, the  statements using fiducial probability can be replaced with the \DS belief function, which is different from the \IM belief function.

\section{Inferential models revisited}
\label{s:IMdef}
There are several definitions of \IMs that differ in details. In this section we present what we consider to be the most common version in the \IM literature. The definition uses three steps: association, prediction and combination.

{\em Association-step:}
In this step, one associates data, parameters and auxiliary random variables using a deterministic equation
\begin{equation}\label{eq:association}
    a(Y,\theta,U)=0.
\end{equation}
More precisely, this means that for all $\theta\in\Theta$ there exist random variables $U, Y$ defined on the same probability space satisfying \eqref{eq:association}, where
the auxiliary variable $U$ has a known distribution $f_U$ free of unknown parameters, for example $U(0,1)$, and $Y$ has density $f(y|\theta)$ with respect to some dominating measure. The function  $a(y,\theta,u)$ is assumed measurable. One of the consequences of the association is that for any observed $y$ there is at least one $u_{y,\theta}$ satisfying $a(y,\theta,u_{y,\theta})=0$. 

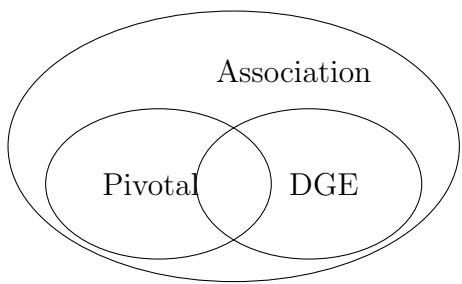
\begin{figure}[ht]
\centering
\begin{tikzpicture}
[xshift=7cm,yshift=-0.5cm]
      \draw (-1,0) ellipse (1.5cm and 1cm);
      \draw (1,0) ellipse (1.5cm and 1cm);
      \draw (0,0.5) ellipse (3cm and 1.8cm);
      \node (P) [xshift=-1.1cm,yshift=0cm]{Pivotal};
      \node (D) [xshift=1.2cm,yshift=0cm]{DGE};
      \node (A) [xshift=.8cm,yshift=1.5cm]{Association};
\end{tikzpicture}
\caption{A visualization of relationships between association, data generating and pivotal equations for a statistical model.}
\label{fig:1}
\end{figure}
A well-known example of association is the pivotal equation:
$T(Y,\theta)=U$, where if $Y$ follows $f(y|\theta)$ then the random variable $U$ has distribution free of unknown parameters. Another is the data generating equation (DGE) $Y=G(U,\theta)$. See Figure~\ref{fig:1} for a visualization of relationships between association, data generating and pivotal equations.

The association is then used to obtain a collection of sets $\Theta_y(u)$ of candidate parameter values, where $\Theta_y(u)\equiv \{\theta:a(y,\theta,u)=0\}$.

{\em Predict-step:}  Denote by $\theta_0$ the true parameter used in generating the observed data $y$. Denote by $u_{y,\theta_0}$ a value of $u$ that is associated by \eqref{eq:association} with the observed data $y$ and true parameter $\theta_0$. The goal of this step is to predict the value of $u_{y,\theta_0}$. This is done using a predictive random set {$\calS: \Omega \rightarrow 2^{R(U)}$, where $R(U)$ is the range of $U$. For example, if $U$ is $U(0,1)$, then $R(U)=[0,1]$ and one possible random set is $\mathcal S=[0,U^*]$, where $U^*$ is another $U(0,1)$ random variable.}

A good choice of $\mathcal S$ is crucial to the interpretation and properties of the \IM.  The following assumption is typically made on $\gamma(u) =  P(u \in \calS)$ \citep{liu2020inferential}: 
\begin{equation}\label{eq:Seq}
P(\gamma(U^\star)\leq\alpha)\leq\alpha
\mbox{ for all $\alpha\in(0,1),$}
\end{equation}
where $U^\star$ follows the same distribution as $U$ in \eqref{eq:association}. \footnote{Comment on notation: Throughout this paper we will denote by $U^\star$ a random variable that has the same distribution as the random variable used in the association step, often $N(0,1)$ or $U(0,1)$. On the other hand we will use $U^*$ to exclusively denote $U(0,1)$ random variable.}

{\em Combine-step:} \IM inference is based on a random set $\Theta_y(\calS) = \bigcup_{u\in \calS} \Theta_y(u)$.  Notice that $\Theta_y(\calS)$ is obtained using a process that is similar to de-pivoting in classical statistics \citep{CasellaBerger2002}.
Evidence in favor and against any assertion $A$ is based on a relationship between the assertion $A$ and the random set $\Theta_y(\calS)$.

There are a number of relevant summaries of the distribution of $\theta_y(\calS)$. In particular, the lower probability known as belief function is defined as
\begin{equation}
    \label{eq:belief}
\bel(A)\equiv P(\Theta_y(\calS) \subset A|\Theta_y(\calS) \neq \emptyset);
\end{equation}
the upper probability known as plausibility function is defined as: 
\[\pl(A)\equiv 1-\bel(A^c) = P(\Theta_y(\calS)\cap A \neq\emptyset|\Theta_y(\calS) \neq \emptyset). \]
If the set $\{\Theta_y(\calS) \subset A\}$ in \eqref{eq:belief} is not measurable, regular outer measure is used. 

In particular, $\bel(A)$ could be viewed as a measure of evidence for assertion $A$ while $1-\pl(A)$ as a measure of evidence against assertion $A$. The gap between $\bel(A)$ and $\pl(A)$ is an important feature of \IM. However, the following proposition shows that this gap is perhaps too large. 
\begin{proposition}\label{prop:gap}
If any two independent copies of the random set $\mathcal S$ have non-empty intersections with probability 1, then $\bel(A)>0$ implies $\pl(A)=1$. 
\end{proposition}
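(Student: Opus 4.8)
The plan is to argue by contradiction, converting both hypotheses into statements about realizations of the predictive random set $\calS$ and then colliding them through the intersection assumption. Suppose $\bel(A)>0$ but $\pl(A)<1$. From $\bel(A)=P(\Theta_y(\calS)\subseteq A\mid\Theta_y(\calS)\neq\emptyset)>0$ we get that $\{\emptyset\neq\Theta_y(\calS)\subseteq A\}$ has positive probability (read as positive outer probability if this set fails to be measurable, exactly as in~\eqref{eq:belief}); and from $\pl(A)=1-\bel(A^c)<1$ we get $\bel(A^c)>0$, so $\{\emptyset\neq\Theta_y(\calS)\subseteq A^c\}$ also has positive probability. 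Hence there exist realizations $s$ and $s'$ of $\calS$ with $\emptyset\neq\Theta_y(s)\subseteq A$ and $\emptyset\neq\Theta_y(s')\subseteq A^c$.

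Next I would feed these into the hypothesis. Let $\calS_1,\calS_2$ be two independent copies of $\calS$. By independence the event $\{\emptyset\neq\Theta_y(\calS_1)\subseteq A\}\cap\{\emptyset\neq\Theta_y(\calS_2)\subseteq A^c\}$ has positive probability, while $\calS_1\cap\calS_2\neq\emptyset$ holds with probability one; deleting this null set keeps the event non-null, so there is an outcome at which $\calS_1\cap\calS_2\neq\emptyset$, $\emptyset\neq\Theta_y(\calS_1)\subseteq A$ and $\emptyset\neq\Theta_y(\calS_2)\subseteq A^c$ hold simultaneously. Picking $u\in\calS_1\cap\calS_2$ and using $\Theta_y(\calS_i)=\bigcup_{v\in\calS_i}\Theta_y(v)$ gives $\Theta_y(u)\subseteq\Theta_y(\calS_1)\subseteq A$ and $\Theta_y(u)\subseteq\Theta_y(\calS_2)\subseteq A^c$, hence $\Theta_y(u)=\emptyset$; and the same applies to every $u\in\calS_1\cap\calS_2$, so $\Theta_y(\calS_1\cap\calS_2)=\emptyset$ even though $\calS_1\cap\calS_2\neq\emptyset$.

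The main obstacle is then to extract a contradiction from this last line, which is where the structure of $\calS$ has to be used rather than just the bare ``all copies intersect'' property. Concretely, I want that a non-empty overlap $\calS_1\cap\calS_2$ cannot sit entirely inside $\{u:\Theta_y(u)=\emptyset\}$ when $\Theta_y(\calS_1)$ and $\Theta_y(\calS_2)$ are non-empty --- equivalently, that a representative point of the overlap can be chosen with a non-empty candidate set. For a nested family of predictive random sets this is automatic: of any two realizations one contains the other, so the overlap \emph{is} the smaller realization and its candidate set is non-empty (and nestedness trivially implies the intersection hypothesis), which is the case I would write down first. The remaining work is to pass from the nested case to the general hypothesis, ruling out that $\calS$ concentrates its pairwise overlaps on the region where $\Theta_y$ vanishes. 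Once this is settled, take $s$ as in the first paragraph: on $\{\Theta_y(\calS)\neq\emptyset\}$ any $u\in\calS\cap s$ with $\Theta_y(u)\neq\emptyset$ satisfies $\emptyset\neq\Theta_y(u)\subseteq\Theta_y(\calS)\cap A$, so $\Theta_y(\calS)\cap A\neq\emptyset$ almost surely there, i.e. $\pl(A)=1$ --- the desired contradiction.
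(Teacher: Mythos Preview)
Your contradiction argument via two independent copies is exactly the paper's approach. The paper's proof is essentially one line: from $\bel(A)>0$ and $\pl(A)<1$ (i.e.\ $\bel(A^\complement)>0$) it writes $P(\Theta_y(\calS_1)\subset A)>0$ and $P(\Theta_y(\calS_2)\subset A^\complement)>0$ and simply declares this to contradict $P(\calS_1\cap\calS_2=\emptyset)=0$. That is, the paper asserts without comment the very implication you spend your third paragraph worrying about---that $\Theta_y(\calS_1)\subset A$ together with $\Theta_y(\calS_2)\subset A^\complement$ forces $\calS_1\cap\calS_2=\emptyset$---implicitly taking $\Theta_y(u)\neq\emptyset$ on the overlap for granted (indeed, it even drops the conditioning on $\{\Theta_y(\calS)\neq\emptyset\}$ in writing $\bel(A)=P(\Theta_y(\calS_1)\subset A)$). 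Your observation that this step is automatic for nested $\calS$ is correct: then the overlap equals the smaller of $\calS_1,\calS_2$, whose image under $\Theta_y$ is non-empty by the conditioning in~\eqref{eq:belief}. Since nested random sets are the case the paper singles out immediately after the proposition, your nested-case argument already covers what the paper has in mind; the ``remaining work'' you flag for the bare intersection hypothesis is not addressed in the paper's proof either.
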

\begin{proof}
  Consider two independent copies of the random set $\calS_1$ and $\calS_2$. Recall $\bel(A)=P(\Theta_y(\calS_1)\subset A)>0$. If $\pl(A)<1$, then $P(\Theta_y(\calS_2)\subset A^\complement)>0$ which is a contradiction with the assumption  $P(\calS_1\cap\calS_2=\emptyset)=0$.
\end{proof}
 The implication of this proposition is that at most one of the probability bounds ($\bel(A),\pl(A)$) provides any useful information.
Notice that any nested random set trivially satisfies the assumption of non-empty intersection.

One of the key proprieties of \IMs is {\em validity} \citep{martin2015inferential}. While the following lemma is  well-known, we present its proof because it showcases an important technique used repeatedly in this manuscript.
\begin{lemma}
Assuming \eqref{eq:Seq}, then for any $A$,
\begin{equation}\label{eq:validity}
    \sup_{\theta\notin A}\bar P_\theta(\operatorname{bel}_Y(A)\geq 1-\alpha)\leq\alpha,
\end{equation}
where $\bar P_\theta$ is the regular outer measure associated with the likelihood of $Y$.
\end{lemma}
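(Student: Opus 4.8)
The plan is to dominate the map $y\mapsto\bel(A)$ by a genuine significance statistic and then read off the conclusion from \eqref{eq:Seq}. Fix $\theta\notin A$ and an observation $y$, and pick some $u_{y,\theta}$ with $a(y,\theta,u_{y,\theta})=0$, so that $\theta\in\Theta_y(u_{y,\theta})$. The first step is the elementary inclusion $\{u_{y,\theta}\in\calS\}\subseteq\{\theta\in\Theta_y(\calS)\}\subseteq\{\Theta_y(\calS)\neq\emptyset\}$: the value $u_{y,\theta}$ being captured by $\calS$ forces $\theta$ into $\Theta_y(\calS)$, which in particular keeps $\Theta_y(\calS)$ non-empty. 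Using $\{\theta\}\subseteq A^c$, monotonicity of the plausibility function, and the definition $\gamma(u)=P(u\in\calS)$, this yields the pointwise bound
\[
\bel(A)\;=\;1-\pl(A^c)\;\le\;1-\pl(\{\theta\})\;\le\;1-P(u_{y,\theta}\in\calS)\;=\;1-\gamma(u_{y,\theta}),
\]
so that $\{\,y:\bel(A)\ge 1-\alpha\,\}\subseteq\{\,y:\gamma(u_{y,\theta})\le\alpha\,\}$. It then remains only to identify the distribution of $\gamma(u_{Y,\theta})$ under $Y\sim f(\cdot\mid\theta)$.

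For the second step I would invoke the association itself: by the defining property of \eqref{eq:association}, for this $\theta$ there is a probability space carrying $U\sim f_U$ and $Y\sim f(\cdot\mid\theta)$ with $a(Y,\theta,U)=0$, so $U$ is an admissible value of $u_{Y,\theta}$ and $\gamma(u_{Y,\theta})$ may be taken to be $\gamma(U)$, whose law is that of $\gamma(U^\star)$. In the archetypal cases this is transparent --- for a pivot $T(Y,\theta)=U$ one has $u_{Y,\theta}=T(Y,\theta)\stackrel{d}{=}U^\star$ by definition of a pivot, and likewise for an invertible data generating equation --- and combining with Step~1,
\[
\bar P_\theta\big(\operatorname{bel}_Y(A)\ge 1-\alpha\big)\;\le\;P_\theta\big(\gamma(u_{Y,\theta})\le\alpha\big)\;=\;P\big(\gamma(U^\star)\le\alpha\big)\;\le\;\alpha
\]
by \eqref{eq:Seq}. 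Taking the supremum over $\theta\notin A$ then gives \eqref{eq:validity}.

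The delicate point --- and the reason the statement is phrased with the regular outer measure $\bar P_\theta$ --- is that $y\mapsto\bel(A)$ need not be measurable, since the event $\{\Theta_y(\calS)\subseteq A\}$ defining it need not be. The technique I would emphasize is that this is harmless: the superset manufactured in Step~1, namely $\{y:\gamma(u_{y,\theta})\le\alpha\}$, is measurable (for a measurable selection $y\mapsto u_{y,\theta}$, which is automatic in the pivotal and data generating equation settings), so the first inequality in the last display is just monotonicity of outer measure under inclusion. I expect Step~1 to be entirely routine; the only place genuine care is needed is the coupling in Step~2 --- matching $u_{Y,\theta}$ to $U^\star$ under $P_\theta$ and keeping track of which sets one is entitled to measure --- which is precisely the manoeuvre the paper flags as recurring throughout the manuscript.
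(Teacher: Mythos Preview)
Your proof is correct and follows essentially the same route as the paper: bound $\bel(A)\le 1-\gamma(u_{y,\theta})$ pointwise for any $\theta\notin A$, then couple $u_{Y,\theta}$ with $U$ via the association and apply \eqref{eq:Seq}. The only cosmetic difference is that you reach the pointwise bound through plausibility ($\bel(A)=1-\pl(A^c)\le 1-\pl(\{\theta\})$), whereas the paper passes directly from the conditional definition of $\bel$ to $P(\Theta_y(\calS)\subset A)$ using $\emptyset\subset A$; your added remarks on measurability and the role of $\bar P_\theta$ are more explicit than the paper's own treatment.
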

\begin{proof}
 Fix $\theta\notin A$ and because $\emptyset\subset A$ we have
 \[\bel(A)
 =P(\Theta_y(\calS) \subset A|\Theta_y(\calS) \neq \emptyset)\leq P(\Theta_y(\mathcal S)\subset A)\leq 1- \gamma(u_{y,\theta}),\]
 where $u_{y,\theta}$ is the $u$ value associated with $y$ and $\theta$. 
 Consequently, using the associated $Y$ and $U$
 \[
 \bar P_\theta(\operatorname{bel}_Y(A)\geq 1-\alpha)
 \leq P(1-\gamma(U)\geq 1-\alpha)=
 P(\gamma(U)\leq \alpha)\leq \alpha.
 \]
 The statement follows by taking supremum over $A^\complement$.
\end{proof}

\begin{example}
\label{ex:ex1}
We shall investigate \IM inference for the normal location model, i.e., $Y \sim N(\theta,1)$.

$A$-step:  The association between $Y$, $\theta$, and auxiliary variable $U$ can be expressed as $Y=\theta+\Phi^{-1}(U)$, where $\Phi$ is a distribution function of standard normal distribution. From here $\Theta_y(u)=\{y-\Phi^{-1}(u)\}$.

$P$-step:  One could predict the unobserved $u$ with a predictive random set $\calS = S_{U^*}$, where $U^*$ follows $U(0,1)$, and  $S_\alpha$ is defined by 
\footnote{There are other reasonable choices of $S_\alpha$ that lead to very different $\bel(A)$ values, e.g., $S_\alpha=(0.5\alpha,0.5+0.5\alpha)$, $S_\alpha=[0,\alpha]$, and $S_\alpha=[\alpha,1]$.} 
\begin{equation}
S_\alpha = \left\{ u \in (0,1): |u-0.5|<|\alpha-0.5|\right\},\quad \alpha \in(0,1).
\label{eq:S}
\end{equation}
It is easy to see that $\gamma(U^\star)\sim U(0,1)$ and \eqref{eq:Seq}
 is satisfied.

$C$-step: Considering the choice of predictive random set $S$ in Equation~\eqref{eq:S}, the random set $\Theta_y(\calS)$ is given by
\begin{multline*}
\Theta_y(\calS) 
=  \cup_{u\in S_{U^*}} \{y-\Phi^{-1}(u)\} \\
=  \Big(y-\Phi^{-1}( 0.5+|U^*-0.5|), y-\Phi^{-1}( 0.5-|U^*-0.5|)\Big).
\end{multline*}
The plausibility function is 
$
\pl(\theta) = 1- |2\Phi(y-\theta)-1|.
$ The \CC  based on the usual, two sided confidence intervals  \eqref{eq:initialCC} satisfies $\cc(\theta)=1-\pl(\theta)$.
Examples of $\cc(\theta)$ for this problem can be found in the left two panels of Figure~\ref{fig:2}.
\end{example}

We end this section by stating the following important lemma.
\begin{lemma}\label{lemma:nested}
Consider a random set $\mathcal S$ and corresponding $\gamma(u)$. 
The random set $\mathcal S'=S_{U^*}$, where $U^*$ is $U(0,1)$ and $S_\alpha=
\{u\, :\, \gamma(u)>1-\alpha\}$, shares the same $\gamma(u)$.
Moreover, if $P(\Theta_y(\mathcal S')\neq\emptyset)=1$, then $\bel(A)\leq\bel'(A)$ for all $A$, where $\bel'$ is the belief function associated with $\mathcal S'$.
\label{l:nested}
\end{lemma}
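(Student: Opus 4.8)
The statement has two parts, which I would prove in order. \emph{First}, that $\mathcal S'=S_{U^*}$ shares the same $\gamma$ is a one-line computation: for a fixed $u$ the event $\{u\in\mathcal S'\}$ equals $\{\gamma(u)>1-U^*\}=\{U^*>1-\gamma(u)\}$, which has probability $\gamma(u)$ since $U^*\sim U(0,1)$ and $1-\gamma(u)\in[0,1]$; hence $P(u\in\mathcal S')=\gamma(u)$ for every $u$. \emph{Second}, for the inequality $\bel(A)\le\bel'(A)$ my plan is to reduce both sides to the single scalar $s:=\sup\{\gamma(u):\Theta_y(u)\not\subset A\}$, with the convention $s=0$ when that set is empty. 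Fix $A$, write $C=\{u:\Theta_y(u)\not\subset A\}$, and note that every $u\in C$ has $\Theta_y(u)\neq\emptyset$ because $\emptyset\subset A$. The basic identity I would use is that, for any random set $\mathcal T$, $\{\Theta_y(\mathcal T)\subset A\}=\{\mathcal T\cap C=\emptyset\}$, since a union of the $\Theta_y(u)$ lies in $A$ exactly when each of its members does.

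Next I would evaluate the two sides separately. For the nested set, $S_{U^*}\cap C=\emptyset$ holds exactly when no $u\in C$ satisfies $\gamma(u)>1-U^*$, i.e.\ when $\sup_{u\in C}\gamma(u)\le 1-U^*$, i.e.\ when $U^*\le 1-s$; so $P(\Theta_y(\mathcal S')\subset A)=1-s$, and since $P(\Theta_y(\mathcal S')\neq\emptyset)=1$ this is precisely $\bel'(A)$. For the original set, monotonicity of probability gives $P(\mathcal S\cap C\neq\emptyset)\ge\sup_{u\in C}P(u\in\mathcal S)=s$ (using $\{u_0\in\mathcal S\}\subset\{\mathcal S\cap C\neq\emptyset\}$ for each $u_0\in C$), hence $P(\Theta_y(\mathcal S)\subset A)\le 1-s$.

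To conclude I would unwind the definition of $\bel$. Writing $p=P(\Theta_y(\mathcal S)\neq\emptyset)$ and splitting $\{\Theta_y(\mathcal S)\subset A\}$ into the disjoint events $\{\Theta_y(\mathcal S)\subset A,\ \Theta_y(\mathcal S)\neq\emptyset\}$ and $\{\Theta_y(\mathcal S)=\emptyset\}$ (again using $\emptyset\subset A$), one gets $\bel(A)=\big(P(\Theta_y(\mathcal S)\subset A)-(1-p)\big)/p\le\big((1-s)-(1-p)\big)/p=1-s/p\le 1-s=\bel'(A)$, the last step using $p\le 1$ (and $s\le p$, which follows from $\gamma(u)\le p$ for $u\in C$, keeps the middle expression well-defined). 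The part I expect to require the most care is exactly this final bookkeeping: tracking the conditioning on non-emptiness of $\Theta_y(\mathcal S)$, handling the degenerate cases $C=\emptyset$ and $p=0$, and — as elsewhere in the paper — passing to the appropriate regular outer (or inner) measure when $\{\mathcal S\cap C\neq\emptyset\}$ is not measurable. The genuinely probabilistic input, namely monotonicity of measure and the law of $U^*$, is minimal.
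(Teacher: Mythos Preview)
Your proof is correct and follows essentially the same route as the paper: both arguments reduce to the scalar $s=\sup\{\gamma(u):u\in\mathcal W_y(A)^\complement\}$, bound $P(\mathcal S\subset\mathcal W_y(A))\le 1-s$ via monotonicity, and identify $1-s$ exactly with $P(\mathcal S'\subset\mathcal W_y(A))=\bel'(A)$ using the law of $U^*$. Your write-up is in fact more explicit than the paper's about the conditioning on $\{\Theta_y(\mathcal S)\neq\emptyset\}$ (the paper simply asserts $\bel(A)\le P(\mathcal S\subset\mathcal W_y(A))$ without justification), so the extra bookkeeping you flagged at the end is a genuine addition rather than a gap.
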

\begin{proof}
Compute:
\[\gamma'(u)=P(u\in\mathcal S')=P\left(1-U^*\leq \gamma(u)\right)=\gamma(u).\]
To prove the second part of the lemma, define ${\mathcal W}_y(A)=\{u \,:\, \Theta_y(u)\subset A\}$. We have 
\begin{multline*}
    \bel(A)\leq P(\mathcal S\subset {\mathcal W}_y(A))\leq 1-\sup\{\gamma(u): \ u\in {\mathcal W}_y(A)^\complement\}\\
    =P(\mathcal S'\subset {\mathcal W}_y(A))=\bel'(A),
\end{multline*}
where the last equality follows because of $P(\Theta_y(\mathcal S')\neq\emptyset)=1$.
\end{proof}

The meaning of this lemma is that in most situations only nested random sets $\mathcal S'$ defined in Lemma~\ref{l:nested} should be used. This is because given $\gamma(u)$ these random sets are the most efficient.

\section{Generalized fiducial distribution and inferential models}\label{s:fiducialIM}

Let us first quickly review the definition of generalized fiducial distribution (\GFD). Interested readers can find more details in \citep{hannig2016generalized}.
Given association \eqref{eq:association},
we define first the pseudo-solution of the association equation using the optimization problem:
\begin{equation}\label{eq:FIDopt}
 Q_{y}(u)=\arg\min_{\theta^{\star}} {\| a(y, \theta^{\star},u ) \|}.
\end{equation}
Typically, $\|\cdot\|$ is either $\ell_2$ or $\ell_\infty$ norm. If there is more than one solution to \eqref{eq:FIDopt}, one minimizer is selected based on some possibly random rule. Thus when there are multiple minimizers, there are multiple versions of fiducial distribution based on which one is selected.

Next, for each small $\epsilon>0$, define the random variable $\theta_\epsilon^\star=Q_\textbf{y}(U_\epsilon^{\star})$,
where $U_\epsilon^{\star}$ has the distribution of $U$ {\em truncated} to the set 
\begin{equation}\label{eq:truncate}
\manifold{\epsilon} = \{ U_\epsilon^{\star} : \|a(y, U_\epsilon^{\star}, \theta_\epsilon^\star ) \| = \| a(y, U_\epsilon^{\star}, Q_y(U_\epsilon^{\star}) ) \| \leq \epsilon \},
\end{equation}
i.e., having the density $f_U(u^\star) I_{ \manifold{\epsilon}}(u^\star)/(\int_{\manifold{\epsilon}} f_U(u)\,du),$ where $f_U$ is the density of $U$.
Then assuming that the random variable $\theta_\epsilon^\star$ converges in distribution as $\epsilon\to 0$, the GFD is defined as the limiting distribution of $\theta^\star=\lim_{\epsilon\to 0} \theta_\epsilon^\star$. The fiducial probability is then
$\fid(A)=P(\theta^\star\in A)$, where the probability is based on the distribution of $\theta^\star$ with the observed data $y$ taken as fixed. 
Clearly, if $P(U^\star\in \manifold{0}) >0$ then $\theta^\star=\theta^\star_0$, and 
\[\fid(A)=P(Q_y(U^\star)\in A\mid U^\star\in\manifold{0}),\]
where again $U^\star$ has the same distribution as $U$ in \eqref{eq:association}.

Notice that in the \GFD literature the association is usually assumed to be based on a DGE. While any association equation, can be used to conduct statistical inferences, only the data generating equation is guaranteed not to lose information. This loss of information can occur when there are multiple $y$ for each $\theta$ and $u$ solving the association equation, and the inference is based on insufficient statistics. 

The following theorems provide connections between \IMs and \GFI. They are established in terms of belief though similar statements can be derived for plausibility.

\begin{theorem}
Consider an association \eqref{eq:association}, assume $P(U^\star\in\manifold{0})=1$, and that the predictive random set satisfies \eqref{eq:Seq}. Then 
$\bel(A)\leq \fid(A)\leq \pl(A)$ for any measurable $A$.
\label{thm:fiducial1}
\end{theorem}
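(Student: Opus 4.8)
The plan is to interpose, between $\bel(A)$ and $\fid(A)$, the single quantity
\[\beta_A:=\sup\{\gamma(u):\Theta_y(u)\not\subset A\}\qquad(\sup\emptyset:=0),\]
by showing $\bel(A)\le 1-\beta_A$ and $1-\fid(A)\le\beta_A$, and then obtaining the plausibility inequality for free by applying the belief inequality to $A^\complement$. Note $\gamma(u)=P(u\in\calS)\in[0,1]$, so $\beta_A\in[0,1]$.

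For the belief bound I would reuse the covering argument from the proof of the validity lemma. Set $\mathcal{W}_y(A):=\{u:\Theta_y(u)\subset A\}$; since $\Theta_y(\calS)=\bigcup_{u\in\calS}\Theta_y(u)$, the event $\{\Theta_y(\calS)\subset A\}$ equals $\{\calS\subseteq\mathcal{W}_y(A)\}$, and because $\emptyset\subset A$ we have $\bel(A)\le P(\Theta_y(\calS)\subset A)$ exactly as in that proof. For any fixed $u\notin\mathcal{W}_y(A)$ one has $\{u\in\calS\}\subseteq\{\calS\cap\mathcal{W}_y(A)^\complement\neq\emptyset\}$, hence $P(\calS\cap\mathcal{W}_y(A)^\complement\neq\emptyset)\ge\gamma(u)$; taking the supremum over such $u$ gives
\[\bel(A)\le P(\calS\subseteq\mathcal{W}_y(A))=1-P(\calS\cap\mathcal{W}_y(A)^\complement\neq\emptyset)\le 1-\beta_A\]
(using the regular outer measure if $\{\Theta_y(\calS)\subset A\}$ is not measurable).

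For the fiducial bound, observe that $P(U^\star\in\manifold{0})=1$ collapses the GFD to $\fid(A)=P(Q_y(U^\star)\in A)$. The key point is that on $\manifold{0}$ the minimized norm in \eqref{eq:FIDopt} is zero, so $Q_y(u)$ exactly solves $a(y,\cdot,u)=0$ and therefore $Q_y(u)\in\Theta_y(u)$; hence $Q_y(u)\notin A$ forces $\Theta_y(u)\not\subset A$, i.e. $\gamma(u)\le\beta_A$. Consequently $\{u\in\manifold{0}:Q_y(u)\notin A\}\subseteq\{u:\gamma(u)\le\beta_A\}$, so
\[1-\fid(A)=P\big(Q_y(U^\star)\notin A,\ U^\star\in\manifold{0}\big)\le P\big(\gamma(U^\star)\le\beta_A\big)\le\beta_A,\]
the last step being exactly Assumption~\eqref{eq:Seq} (the boundary cases $\beta_A\in\{0,1\}$ handled directly, $\beta_A=0$ via $P(\gamma(U^\star)\le\alpha)\le\alpha$ for all $\alpha\in(0,1)$). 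Chaining the two displays gives $\bel(A)\le 1-\beta_A\le\fid(A)$; applying this with $A^\complement$ in place of $A$ gives $1-\pl(A)\le 1-\fid(A)$, i.e. $\fid(A)\le\pl(A)$.

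I expect the main obstacle to be conceptual rather than computational: one must see that the single number $\beta_A$, built from the predictive random set's coverage function $\gamma$, simultaneously caps $\bel(A)$ from above and --- through the very condition \eqref{eq:Seq} that $\gamma$ is required to satisfy --- caps $P(Q_y(U^\star)\notin A)$ from above, the bridge being that on $\manifold{0}$ the fiducial draw $Q_y(U^\star)$ lands inside $\Theta_y(U^\star)$. Beyond that, only routine care is needed for measurability (outer measures) and for degenerate situations such as $P(\Theta_y(\calS)\neq\emptyset)=0$ or $\mathcal{W}_y(A)=\emptyset$; these cause no difficulty because \eqref{eq:Seq} forces the essential supremum of $\gamma$ to equal $1$.
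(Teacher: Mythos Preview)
Your proof is correct and follows essentially the same route as the paper: both interpose the quantity $1-\beta$ with $\beta=\sup\{\gamma(u):\Theta_y(u)\not\subset A\}$, bound $\bel(A)\le 1-\beta$ via the covering argument, and bound $\fid(A)\ge 1-\beta$ by combining the observation $Q_y(u)\in\Theta_y(u)$ on $\manifold{0}$ with assumption~\eqref{eq:Seq}; plausibility then comes from the complement. The paper packages this with the sets $\underline{\mathcal U}_y(A)=\mathcal W_y(A)\cap\manifold{0}$ and argues the fiducial side as $P(U^\star\in\underline{\mathcal U}_y(A))\ge P(\gamma(U^\star)>\beta)$, but this is just the contrapositive of your implication $Q_y(u)\notin A\Rightarrow\gamma(u)\le\beta_A$.
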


\begin{proof}
Define $\underline{\mathcal U}_y(A)=\{u\in \manifold{0}\,:\, \Theta_y(u)\subset A\}=\mathcal{W}_y(A)\cap\manifold{0}$, which is measurable because $A$ is measurable. Next, define $\overline{\mathcal U}_y(A)=\underline{\mathcal U}_y(A^\complement)^\complement$.
Note that because $P(U^\star\in\manifold{0})=1$, there is no conditioning and 
\[ P\left(U^\star\in \underline{\mathcal U}_y(A)\right)\leq \fid(A)\leq P\left(U^\star\in \overline{\mathcal U}_y(A)\right).\]
Moreover, $\Theta_y(\mathcal S)\subset A$ if and only if $\mathcal S\subset \underline{\mathcal U}_y(A)$ and therefore
\begin{equation*}
    \bel(A)=P(\mathcal S\subset \underline{\mathcal U}_y(A))\leq 1-\sup\{\gamma(u): \ u\in \underline{\mathcal U}_y(A)^\complement\}\equiv 1-\beta.
\end{equation*}
Finally, $\underline{\mathcal U}_y(A)\supset \{u:\gamma(u)>\beta\}$ and
\eqref{eq:Seq} imply
\[ P\left(U^\star\in \underline{\mathcal U}_y(A)\right)\geq P(\gamma(U^\star)>\beta)\geq 1-\beta.\]
The rest of the proof follows by the definition of $\pl(A)$.
\end{proof}

\begin{theorem}\label{thm:IMfiducialAttained}
Consider association \eqref{eq:association} with $U$ having a non-atomic distribution.
For any fixed measurable set $A$ and fixed data $y$, there exists a nested random set $\mathcal S$ satisfying \eqref{eq:Seq} and a version of fiducial distribution satisfying $\bel(A)=\fid(A)$. 
\end{theorem}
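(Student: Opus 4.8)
The plan is to realize $\bel(A)=\fid(A)$ by pairing the \emph{least favourable} version of the fiducial distribution for $A$ with the \emph{most efficient} predictive random set, and to show that both quantities then equal $P(U^\star\in\mathcal{W}_y(A))$, where $\mathcal{W}_y(A)=\{u:\Theta_y(u)\subset A\}$ as in Lemma~\ref{l:nested}. I carry this out in detail under the extra assumption $P(U^\star\in\manifold{0})=1$, that is, exact solutions of the association exist $U^\star$-almost surely, and then indicate how the general case reduces to it. The version is chosen by breaking ties in \eqref{eq:FIDopt} \emph{against} $A$: whenever the set of minimizers meets $A^\complement$, take $Q_y(u)\in A^\complement$. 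Off the $U^\star$-null set $\manifold{0}^\complement$ this minimizer set equals $\Theta_y(u)$, so $Q_y(u)\in A$ iff $\Theta_y(u)\subset A$, and, there being no conditioning when $P(U^\star\in\manifold{0})=1$, $\fid(A)=P(Q_y(U^\star)\in A)=P(U^\star\in\mathcal{W}_y(A))=:c$. Any other version can only enlarge $\{Q_y(U^\star)\in A\}$, so this choice minimizes $\fid(A)$; by Theorem~\ref{thm:fiducial1} that minimum is also an upper bound for every attainable $\bel(A)$, and the construction below attains it.

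Next I specify the random set through the function $\gamma$ of Lemma~\ref{l:nested}, requiring $\gamma(U^\star)\sim U(0,1)$ — so that \eqref{eq:Seq} holds with equality — while keeping $\gamma$ as small as possible off $\mathcal{W}_y(A)$. Assume $0<c<1$; the cases $c\in\{0,1\}$ are immediate. Because $U$ is non-atomic, the conditional laws of $U^\star$ given $\{U^\star\in\mathcal{W}_y(A)\}$ and given its complement are non-atomic, so one can choose a measurable $\gamma$ with values in $[0,1]$ pushing the former onto $U(1-c,1)$ and the latter onto $U(0,1-c)$, and set $\gamma=0$ on the $U^\star$-null part of $\manifold{0}^\complement$ inside $\mathcal{W}_y(A)$. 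Then $\gamma(U^\star)\sim U(0,1)$, $\sup_{u\notin\mathcal{W}_y(A)}\gamma(u)=1-c$, and $\sup_{u\in\mathcal{W}_y(A)\cap\manifold{0}}\gamma(u)=1$.

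Now put $\calS=S_{U^*}$ with $S_\alpha=\{u:\gamma(u)>1-\alpha\}$, a nested random set satisfying \eqref{eq:Seq}. Since $\sup_{u\in\mathcal{W}_y(A)\cap\manifold{0}}\gamma(u)=1$, for every $U^*\in(0,1)$ there is a $u\in\manifold{0}$ with $\gamma(u)>1-U^*$, so $\calS\cap\manifold{0}\neq\emptyset$ and hence $\Theta_y(\calS)\neq\emptyset$ almost surely, making the conditioning in \eqref{eq:belief} vacuous. Exactly as in the proof of Lemma~\ref{l:nested}, $\{\Theta_y(\calS)\subset A\}=\{\calS\subset\mathcal{W}_y(A)\}=\{U^*\le 1-\sup_{u\notin\mathcal{W}_y(A)}\gamma(u)\}=\{U^*\le c\}$, with the outer measure used should $\mathcal{W}_y(A)$ fail to be Borel. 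Therefore $\bel(A)=c=\fid(A)$, proving the theorem under the extra assumption.

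Finally, the general case. When $P(U^\star\in\manifold{0})<1$ the fiducial distribution is only the $\epsilon\to0$ limit of the truncated laws, and the clean identity $\fid(A)=P(U^\star\in\mathcal{W}_y(A))$ can fail because $\epsilon$-approximate solutions contribute. The remedy I would pursue is to run the three steps above on each truncated problem $\manifold{\epsilon}$, with $Q_y$ and $\gamma$ chosen compatibly in $\epsilon$, and pass to the limit using the assumed convergence of $\theta_\epsilon^\star$; equivalently, to show that $\lim_{\epsilon\to0}P(U_\epsilon^\star\in\mathcal{W}_y(A))$ is simultaneously the least-favourable value of $\fid(A)$ and the largest attainable value of $\bel(A)$. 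This limiting step — together with the measure-theoretic bookkeeping of realizing the pushforwards measurably and invoking outer measure where sets fail to be Borel — is the part I expect to be genuinely delicate; the exact-solution core above is routine once the idea ``break ties against $A$, then spread $\gamma$ uniformly using non-atomicity'' is identified.
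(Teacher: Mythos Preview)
Your core argument under $P(U^\star\in\manifold{0})=1$ is correct and is essentially the paper's own approach: fix the fiducial version by breaking ties in \eqref{eq:FIDopt} against $A$, so that $\fid(A)=P(U^\star\in\underline{\mathcal U}_y(A))$, and then use non-atomicity of $U$ to build a nested random set whose belief hits that number. The only cosmetic difference is that you construct the family through an explicit $\gamma$ (pushing the two conditional laws of $U^\star$ onto $U(1-c,1)$ and $U(0,1-c)$), whereas the paper simply asserts existence of nested sets $S_\alpha$ satisfying a short list of properties (nestedness, $P(U^\star\in S_\alpha)=\alpha$, $S_{\fid(A)}\cap\manifold{0}=\underline{\mathcal U}_y(A)$, and $S_\alpha$ escaping $\underline{\mathcal U}_y(A)$ for $\alpha>\fid(A)$). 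Your construction is one concrete witness to that existence claim.

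Where you diverge from the paper is the general case, and here you are making it harder than necessary. No $\epsilon\to 0$ limiting argument is needed when $P(U^\star\in\manifold{0})>0$. The key observation you are missing is that $\Theta_y(u)=\emptyset$ for $u\notin\manifold{0}$, so only $\calS\cap\manifold{0}$ contributes to $\Theta_y(\calS)$; the portion of each $S_\alpha$ lying in $\manifold{0}^\complement$ is inert for belief and serves merely as padding to achieve $P(U^\star\in S_\alpha)=\alpha$ under the \emph{unconditional} law. Thus the paper conditions only on the fiducial side, taking $\fid(A)=P(U^\star\in\underline{\mathcal U}_y(A)\mid\manifold{0})$, and then builds $S_\alpha$ so that $S_{\fid(A)}\cap\manifold{0}=\underline{\mathcal U}_y(A)$ while the unconditional measure of $S_\alpha$ is exactly $\alpha$. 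Your $\gamma$-construction adapts to this directly: push the conditional law of $U^\star$ given $\manifold{0}$ onto the appropriate uniform pieces as before, and distribute the mass on $\manifold{0}^\complement$ to fill in whatever is required for $\gamma(U^\star)\sim U(0,1)$. The case $P(U^\star\in\manifold{0})=0$ is the only one where the limit definition of \GFD is genuinely in play, and the paper is just as brief about it as you are.
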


\begin{proof}
We will be using the same notation as in the proof of Theorem~\ref{thm:fiducial1}. 
If $P(U^\star\in\manifold{0})>0$, then
\[ P\left(U^\star\in \underline{\mathcal U}_y(A)\mid \manifold{0}\right)\leq \fid(A)\leq P\left(U^\star\in \overline{\mathcal U}_y(A)\mid \manifold{0}\right).\]
We can select the minimizer in \eqref{eq:FIDopt} so that $P\left(U^\star\in \underline{\mathcal U}_y(A)\mid \manifold{0}\right)= \fid(A)$. 
If  $P(U^\star\in\manifold{0})=0$ we can use any minimizer in \eqref{eq:FIDopt}.

Next select a collection of sets $S_\alpha,\ \alpha\in[0,1]$ satisfying all of the following: 
a) $S_\alpha\cap\manifold{0}\neq\emptyset$;  
b) $S_{\alpha_1}\subset S_{\alpha_2}$ whenever $\alpha_1<\alpha_2$;
c)  $P(U^\star\in S_\alpha)=\alpha$;
d) $S_{\fid(A)}\cap \manifold{0}=\underline{\mathcal U}_y(A)$; 
e) $S_\alpha\cap \underline{\mathcal U}_y(A)^\complement\neq\emptyset$ whenever $\alpha>\fid(A)$. Such collection always exists but it is not unique. 
The random set $\mathcal S=S_{U^*}$, where $U^*$ follows $U(0,1)$, satisfies the statements of the theorem. 
\end{proof}

Theorems~\ref{thm:fiducial1} and \ref{thm:IMfiducialAttained} show that fiducial probability is a natural bound for IM beliefs and plausibility. This bound can be achieved for any measurable set $A$ using some random set $\mathcal S$. Thus one could argue that IM is less efficient than fiducial probability, i.e., the beliefs are perhaps too small and plausibilities too high. However, this gap allows \IM to gain some favorable properties, such as avoiding false confidence and guaranteeing frequentist validity. However, as seen in Theorem~\ref{thm:IMfiducialAttained}, \IM can guarantee these good properties only if the random set is selected {\em before} seeing the data. Therefore data snooping should be avoided when using \IM. 

Next, we show that there are many sets where belief and fiducial probabilities agree. These sets will be important in establishing a connection between \IM and confidence curves.

\begin{theorem}\label{thm:IMbasis}
Assume that the random set  $\mathcal S$ is nested. For any data $y$ and any $\alpha\in[0,1]$, there exists a set $A_{\alpha,y}$, so that $\bel(A_{\alpha,y})\geq\alpha$, and $A_{\alpha,y}\subset A$ for all $A$ satisfying $\bel(A)\geq\alpha$, and for any $\alpha_1<\alpha_2$, we have $A_{\alpha_1,y}\subset A_{\alpha_2,y}$.

If additionally $P(U^\star\in\manifold{0})=1$ and in \eqref{eq:Seq} we have $\gamma(U^\star)\sim U(0,1)$, then there is a version of $\fid$ satisfying $\bel(A_{\alpha,y})=\fid(A_{\alpha,y})=\alpha$ for all $\alpha$ in the range of $\bel$.
\end{theorem}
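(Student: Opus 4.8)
Here is the plan. The two parts are handled separately; the first rests only on the nesting of $\mathcal S$ and on the order‑theoretic duality between $C\mapsto\Theta_y(C):=\bigcup_{u\in C}\Theta_y(u)$ and $A\mapsto\mathcal W_y(A):=\{u:\Theta_y(u)\subset A\}$, namely $\Theta_y(C)\subset A\iff C\subset\mathcal W_y(A)$, together with $\Theta_y(\bigcup_i C_i)=\bigcup_i\Theta_y(C_i)$. Writing the nested set as $\mathcal S=S_{U^*}$ with $\{S_\beta\}$ increasing in $\beta$ and $U^*\sim U(0,1)$, the set $\{u^*:S_{u^*}\subset\mathcal W_y(A)\}$ is downward closed in $[0,1]$, so (assuming $\bel$ is well defined) $\bel(A)=\bigl(g(A)-n\bigr)/(1-n)$ with $g(A):=\sup\{\beta:S_\beta\subset\mathcal W_y(A)\}$ and $n:=P(\Theta_y(\mathcal S)=\emptyset)$; equivalently $\bel(A)\ge\alpha$ iff $g(A)\ge\tau_\alpha:=n+\alpha(1-n)$ iff $S_\beta\subset\mathcal W_y(A)$ for every $\beta<\tau_\alpha$. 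Defining $A_{\alpha,y}:=\Theta_y\bigl(\bigcup_{\beta<\tau_\alpha}S_\beta\bigr)$ and invoking the duality, this last condition is exactly $A_{\alpha,y}\subset A$. Hence $A_{\alpha,y}\subset A$ for every $A$ with $\bel(A)\ge\alpha$, and taking $A=A_{\alpha,y}$ gives $\bel(A_{\alpha,y})\ge\alpha$; monotonicity in $\alpha$ is immediate since $\tau_\alpha$ is increasing, so the unions $\bigcup_{\beta<\tau_\alpha}S_\beta$ are nested. Measurability of $A_{\alpha,y}$ follows by replacing $\{\beta<\tau_\alpha\}$ with its rationals and, where needed, using the regular outer measure as elsewhere in the paper.

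For the second part, first $\bel(A_{\alpha,y})=\alpha$ whenever $\alpha$ lies in the range of $\bel$: choosing $B$ with $\bel(B)=\alpha$, part one gives $A_{\alpha,y}\subset B$, so monotonicity of $\bel$ yields $\bel(A_{\alpha,y})\le\alpha$, which with $\bel(A_{\alpha,y})\ge\alpha$ forces equality. For the fiducial side, Theorem~\ref{thm:fiducial1} applies (since $P(U^\star\in\manifold{0})=1$ and $\gamma(U^\star)\sim U(0,1)$ imply \eqref{eq:Seq}) and gives $\fid(A_{\alpha,y})\ge\bel(A_{\alpha,y})=\alpha$ for every version of $\fid$, so what remains is to produce one version with $\fid(A_{\alpha,y})\le\alpha$ simultaneously for all such $\alpha$. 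I would revisit the squeeze from the proof of Theorem~\ref{thm:fiducial1}, $P\bigl(U^\star\in\underline{\mathcal U}_y(A_{\alpha,y})\bigr)\le\fid(A_{\alpha,y})\le P\bigl(U^\star\in\overline{\mathcal U}_y(A_{\alpha,y})\bigr)$, where the minimizer in \eqref{eq:FIDopt} may be chosen to attain the lower endpoint. Since $\gamma(U^\star)\sim U(0,1)$ and $\bel(A_{\alpha,y})=\alpha$, the inclusion $\{u:\gamma(u)>1-\alpha\}\subset\underline{\mathcal U}_y(A_{\alpha,y})$ extracted from that proof already yields $P\bigl(U^\star\in\underline{\mathcal U}_y(A_{\alpha,y})\bigr)\ge\alpha$, and the minimality of $A_{\alpha,y}$ should collapse this to exactly $\alpha$; finally, because the $A_{\alpha,y}$ are totally ordered by inclusion, a single choice of the minimizing rule in \eqref{eq:FIDopt} can be made to realize the lower endpoint of the squeeze uniformly in $\alpha$.

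The step I expect to be the main obstacle is precisely this last one: showing $P\bigl(U^\star\in\underline{\mathcal U}_y(A_{\alpha,y})\bigr)=\alpha$ exactly, and exhibiting one version of $\fid$ (one minimizer rule in \eqref{eq:FIDopt}) that attains the Theorem~\ref{thm:fiducial1} lower bound for the whole family $\{A_{\alpha,y}\}$ at once. Concretely this amounts to bounding how far $\mathcal W_y(A_{\alpha,y})=\{u:\Theta_y(u)\subset\Theta_y(\bigcup_{\beta<\tau_\alpha}S_\beta)\}$ can exceed $\bigcup_{\beta<\tau_\alpha}S_\beta$ in $f_U$‑probability, i.e.\ controlling the auxiliary values that the association maps into $A_{\alpha,y}$ from outside the relevant $S_\beta$'s, and using the minimality characterization of $A_{\alpha,y}$ from part one to exclude a strict excess. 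Everything else reduces to the order‑theoretic bookkeeping above, monotonicity of $\bel$, and the already‑proved Theorem~\ref{thm:fiducial1}.
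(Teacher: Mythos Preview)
Your construction for the first part is essentially the paper's: set $A_{\alpha,y}=\Theta_y(S_\alpha)$ for an appropriate nested family $\{S_\alpha\}$ and use the Galois-type duality $\Theta_y(C)\subset A\iff C\subset\mathcal W_y(A)$ to obtain both minimality and $\bel(A_{\alpha,y})\ge\alpha$. You are more explicit than the paper about the conditioning on $\{\Theta_y(\mathcal S)\neq\emptyset\}$ via your $n$ and $\tau_\alpha$; note, however, that under the hypotheses of the second part $P(U^\star\in\manifold{0})=1$ forces $\Theta_y(u)\neq\emptyset$ for $f_U$-almost all $u$, so $n=0$ and $\tau_\alpha=\alpha$ there. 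One small point: you assume a parametrization $\mathcal S=S_{U^*}$ with $U^*\sim U(0,1)$, whereas the paper builds $S_\alpha$ intrinsically from the range $R(\mathcal S)$ via \eqref{eq:principlenestedsets}; these agree in substance.

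For the second part the paper is more direct. After noting, as you do, that $\bel(A_{\alpha,y})=\alpha$ whenever $\alpha$ is attained, the paper does not route through Theorem~\ref{thm:fiducial1} but writes in one line
\[
\fid(A_{\alpha,y})=P(U^\star\in S_\alpha)=P(\gamma(U^\star)>1-\alpha)=\alpha,
\]
using the identification of $S_\alpha$ with $\{u:\gamma(u)>1-\alpha\}$ and the assumed uniformity of $\gamma(U^\star)$. The single minimizer rule it invokes is exactly the one you anticipate: pick $Q_y(u)\in A_{\alpha',y}^\complement$ with the largest possible $\alpha'$, which is well defined because the $A_{\alpha,y}$ are totally ordered. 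The obstacle you isolate---that $\mathcal W_y(A_{\alpha,y})=\underline{\mathcal U}_y(A_{\alpha,y})$ could strictly exceed $S_\alpha$ in $f_U$-measure---is a genuine subtlety; the paper's computation implicitly treats this excess as null, so your instinct that this is where the argument needs the most care is correct, and your plan to exploit the minimality characterization of $A_{\alpha,y}$ together with nestedness is the right lever.
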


\begin{proof}
Recall that the random set $\mathcal S$ is nested.
Denote the range of the random set by $R(\mathcal S)$ and define 
\begin{equation}\label{eq:principlenestedsets}
S_\alpha=\bigcup_{\beta<\alpha}\bigcap\{S\in R(\mathcal S)\,:\, P(\mathcal S\subset S)\geq \beta\}. 
\end{equation}
Because the random set is nested and by continuity of measure, $P(\mathcal S\subset S_\alpha)\geq\alpha$. By definition, for any $\alpha_1<\alpha_2$ we have $S_{\alpha_1}\subset S_{\alpha_2}$.

Define $A_{\alpha,y}=\Theta_y(S_\alpha)$. Clearly, $\bel(A_{\alpha,y})\geq\alpha$ and the sets $A_{\alpha,y}$ are nested. 
If $\bel(A)\geq\alpha$, then by definition $A_{\alpha,y}\subset A$. Moreover, if $\bel(A)=\alpha$ then $\bel(A_{\alpha,y})=\alpha$.

Under the additional assumptions and following the proof of Theorem~\ref{thm:fiducial1} we have
$\fid(A_{\alpha,y})=P(U^\star\in S_\alpha)=P(\gamma(U^\star)>1-\alpha)=\alpha.$ Here we remark that the chosen version of $\fid$ selects one of the minimizers in \eqref{eq:FIDopt} that belongs to $A_{\alpha',y}^\complement$ with the largest possible $\alpha'$.
\end{proof}

Notice that Theorem~\ref{thm:IMbasis} implies $\bel(A)=\sup\{\alpha : A_{\alpha,y}\subset A\}.$ Therefore we will call the collection of $A_{\alpha,y}$ the {\em principle assertions} of \IM as they carry all the information available in the \IM. In the next section we will see 
that principle assertions provide an important link to confidence curves. Similarly we will call $S_\alpha$ in \eqref{eq:principlenestedsets} the {\em principle nested sets}.

\section{Inferential models and confidence curves}
\label{s:ccIM}

Confidence curves could be viewed as functions whose level sets provide a collection of valid confidence intervals at all levels of confidence.
The formal definition is given below:
\begin{definition}
  A function $\cc(\theta)\,: \mathcal Y \times \Theta\to[0,1]$ is a \CC if for all $\alpha\in(0,1)$,
  $P_\theta(\operatorname{cc}_Y(\theta)< \alpha)=\alpha$.
Similarly $\cc(\theta)$ is a conservative \CC if 
  $P_\theta(\operatorname{cc}_Y(\theta)< \alpha)\geq\alpha$,
  for all $\alpha\in(0,1)$.
\end{definition}
In other words, the random confidence curve $\operatorname{cc}_Y(\theta)$ is uniformly distributed when $\theta$ is the true parameter.
Confidence curve generalizes the idea of confidence distribution \citep{XieSingh2013} to higher-dimensional parameters, where its contours provide a nested family of confidence regions indexed by degree of confidence \citep{schweder2016confidence}, i.e., each  $\{\theta\,:\, \cc(\theta)\leq\alpha\}$ is an $\alpha\cdot 100\%$ confidence set. 

\begin{example}
  We continue Example~\ref{ex:ex1}.
  Suppose that in addition to $Y \sim N(\mu_y, 1)$ we also have  $X \sim N(\mu_x, 1)$. We plot confidence curves for $\mu_x$ in the left panel, $\mu_y$ in the middle panel, and $\mu_x/\mu_y$ in the right panel of Figure~\ref{fig:2}. The confidence curve for $\mu_x/\mu_y$ is based on Fieller's confidence set \cite{Fieller1954}. Notice that the shape of the \CC indicates that based on $\alpha$ the confidence set for $\mu_x/\mu_y$ might be an interval, two disjoint intervals, or the whole real line.
\end{example}

\begin{figure}[ht]
 \centering
\includegraphics[width=0.32\textwidth]{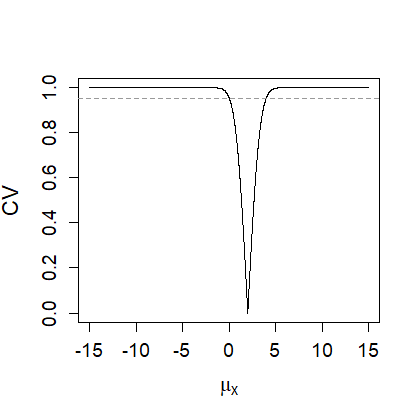}
 \includegraphics[width=0.32\textwidth]{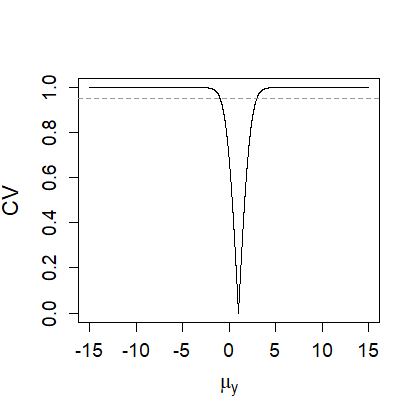}
 \includegraphics[width=0.32\textwidth]{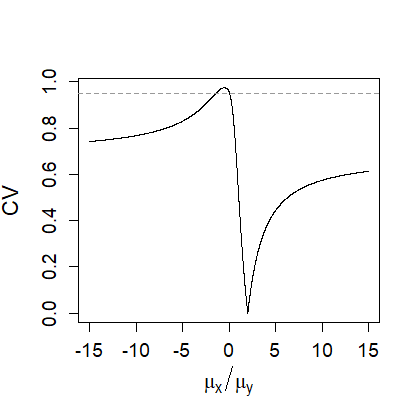}
\caption{Confidence curves for $\mu_x,\mu_y,\mu_x/\mu_y$ where $X\sim N(\mu_x,1)$, $Y\sim N(\mu_y,1)$ given that we observe $x=2,y=1$. The dotted lines indicate 95\% confidence sets.}
\label{fig:2}
\end{figure}

We shall investigate the relationship of \CCs to \IMs next.
 Let us first show that \CC can be used to define valid belief and plausibility functions without the use of any \IM model.
 \begin{lemma} Let $\cc(\theta)$ be a conservative \CC.
  The function  $\operatorname{cc-bel}_y(A)=\sup\{\alpha : \{\theta : \cc(\theta)\leq\alpha\}
  \subset A\}$ is a valid belief function.
 \end{lemma}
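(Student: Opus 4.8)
The plan is to verify directly that $\operatorname{cc-bel}_y$ satisfies the definition of a valid belief function, namely the validity inequality of the form \eqref{eq:validity}: for any measurable $A$,
\[
\sup_{\theta\notin A}P_\theta(\operatorname{cc-bel}_Y(A)\geq 1-\alpha)\leq\alpha.
\]
The key observation is that the family of level sets $C_\alpha:=\{\theta:\cc(\theta)\leq\alpha\}$ is nested in $\alpha$, so that $\operatorname{cc-bel}_y(A)=\sup\{\alpha: C_\alpha\subset A\}$ behaves like an ``inverse'' of the confidence curve read off along the $A$-containment relation. First I would establish the elementary implication: if $\operatorname{cc-bel}_y(A)\geq 1-\alpha$ then, by definition of the supremum and nestedness of the $C_\beta$'s, the set $C_{1-\alpha-\epsilon}$ (or $C_{1-\alpha}$ after a limiting/continuity argument) is contained in $A$; hence for any $\theta\notin A$ we must have $\theta\notin C_{1-\alpha}$, i.e.\ $\cc(\theta)>1-\alpha$. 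Wait---I need to be careful about the direction of the inequality, so let me restate: I want to relate the event $\{\operatorname{cc-bel}_Y(A)\geq 1-\alpha\}$ to an event about $\cc_Y(\theta)$ for the fixed true $\theta\notin A$, and the containment $C_{1-\alpha}\subset A$ forces $\cc_Y(\theta)>1-\alpha$ whenever $\theta\notin A$.

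The second step is then purely a probability computation using the defining property of a conservative \CC. Fixing $\theta\notin A$, I would argue
\[
P_\theta\bigl(\operatorname{cc-bel}_Y(A)\geq 1-\alpha\bigr)\leq P_\theta\bigl(\cc_Y(\theta)\geq 1-\alpha\bigr)=1-P_\theta\bigl(\cc_Y(\theta)<1-\alpha\bigr)\leq 1-(1-\alpha)=\alpha,
\]
where the middle inequality uses the monotone containment deduced in the first step and the last inequality uses the conservative \CC property $P_\theta(\cc_Y(\theta)<\beta)\geq\beta$ with $\beta=1-\alpha$. Taking the supremum over $\theta\notin A$ gives the validity inequality, and this is precisely the sense in which $\operatorname{cc-bel}_y$ is ``valid.'' I would also note in passing that $\operatorname{cc-bel}_y(\Theta)=1$ and $\operatorname{cc-bel}_y(\emptyset)=0$ (assuming the $C_\alpha$ are nonempty and do not fill all of $\Theta$ for small $\alpha$), and that monotonicity $A\subset B\Rightarrow\operatorname{cc-bel}_y(A)\leq\operatorname{cc-bel}_y(B)$ is immediate from the definition, so the object genuinely is a belief-type set function.

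The main obstacle I anticipate is the boundary bookkeeping between strict and non-strict inequalities and the role of the supremum in the definition of $\operatorname{cc-bel}_y$: the supremum need not be attained, so the step ``$\operatorname{cc-bel}_y(A)\geq 1-\alpha$ implies $C_{1-\alpha}\subset A$'' is really only ``$C_{\beta}\subset A$ for all $\beta<1-\alpha$,'' and I will need either a continuity/limiting argument over $\beta\uparrow 1-\alpha$ or a slightly more careful $\epsilon$-chase so that the final probability bound still closes at exactly $\alpha$ rather than $\alpha+\epsilon$. A clean way around this is to run the argument with $\operatorname{cc-bel}_Y(A)>1-\alpha$ in place of $\geq$, obtain the bound $P_\theta(\cc_Y(\theta)>1-\alpha)\leq\alpha$ for every $\theta\notin A$, and then pass to the $\geq$ version by taking a countable union over a sequence $\alpha_n\uparrow\alpha$; the conservative \CC property is robust enough to absorb this. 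Beyond that, everything is routine, and I would keep the write-up short.
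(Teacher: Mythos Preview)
Your proposal is correct and follows essentially the same route as the paper: show that $\operatorname{cc-bel}_Y(A)\geq 1-\alpha$ forces $\cc_Y(\theta)\geq 1-\alpha$ for any $\theta\notin A$, then apply the conservative \CC property to bound $P_\theta(\cc_Y(\theta)\geq 1-\alpha)\leq\alpha$.

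One remark: the boundary worry you raise is a self-inflicted detour. You try to pass through the intermediate containment $C_{1-\alpha}\subset A$, which indeed may fail at the endpoint, but that containment is not needed. The cleaner observation (which the paper uses implicitly) is the pointwise inequality $\cc_y(\theta)\geq \operatorname{cc-bel}_y(A)$ for every $\theta\notin A$: if $\theta\notin A$ and $\beta=\cc_y(\theta)$, then $\theta\in C_\beta$, so $C_\beta\not\subset A$, and by nestedness no $\alpha'\geq\beta$ lies in $\{\alpha':C_{\alpha'}\subset A\}$; hence $\operatorname{cc-bel}_y(A)\leq\beta$. This already gives the non-strict inequality you want without any $\epsilon$-chase or limiting argument, and the rest of your chain closes at exactly~$\alpha$.
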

 \begin{proof}
   When $\operatorname{cc-bel}_y(A)\geq 1-\alpha$ then $\cc(\theta)\geq 1-\alpha$ for all $\theta\notin A$. Thus
   \[
   P_\theta(\operatorname{cc-bel}_Y(A)\geq 1-\alpha)
     \leq P_\theta(\operatorname{cc}_Y(\theta)\geq 1-\alpha)
     =1-P_\theta(\operatorname{cc}_Y(\theta) < \alpha)
     \leq \alpha
   \]
   for any $\theta\notin A$.
 \end{proof}
Notice that 
$\operatorname{cc-pl}_y(A)=1-\operatorname{cc-bel}_y(A^\complement)=\inf\{\cc(\theta) : \theta\in A\}.$
Therefore $\cc(\theta)$ is a possibility contour in the sense of \cite{zadeh1978fuzzy,dubois2012possibility,dubois2006possibility,augustin2014introduction}. 
Further connections between \IM and possibility theory has been explored in \cite{liu2020inferential}.

The previous lemma is not surprising in light of the next theorem which shows that any \CC can be formally viewed as an instance of a valid \IM. 
\begin{theorem}
{Given a confidence curve, there exists an \IM satisfying \eqref{eq:Seq} which provides inference equivalent to this confidence curve.}
\end{theorem}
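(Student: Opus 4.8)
The plan is to construct, from a given (conservative) confidence curve $\cc(\theta)$, an explicit association and predictive random set so that the resulting \IM satisfies \eqref{eq:Seq} and has $\pl(\theta) = \cc(\theta)$ (equivalently $\bel(A) = \operatorname{cc-bel}_y(A)$ for all $A$). The natural device is to treat the confidence curve itself as the pivotal quantity. Concretely, I would take the auxiliary variable $U$ to be $U(0,1)$ and use the association $a(y,\theta,u) = 0 \iff \cc_y(\theta) = u$; that is, set $\Theta_y(u) = \{\theta : \cc_y(\theta) = u\}$. The defining property of a conservative \CC, namely $P_\theta(\cc_Y(\theta) < \alpha) \geq \alpha$, is exactly what makes this association behave like a pivot: under the true $\theta$, the value $u_{Y,\theta} = \cc_Y(\theta)$ is (stochastically larger than, or equal to) a $U(0,1)$ variable, which will later feed into the validity bookkeeping just as in the proof of the validity Lemma.

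Next I would choose the predictive random set to be the nested family $\calS = S_{U^*}$ with $S_\alpha = [0,\alpha)$ (or $[0,\alpha]$), so that $\gamma(u) = P(u \in \calS) = P(U^* > u) = 1 - u$. Then $\gamma(U^\star) = 1 - \cc_Y(\theta)$ under the true parameter is stochastically bounded below by $U(0,1)$ in the conservative case and exactly $U(0,1)$ in the exact case, hence $P(\gamma(U^\star) \leq \alpha) \leq \alpha$, so \eqref{eq:Seq} holds. With this choice, $\Theta_y(\calS) = \bigcup_{u < U^*} \{\theta : \cc_y(\theta) = u\} = \{\theta : \cc_y(\theta) < U^*\}$, which is precisely the random level set of the confidence curve at a uniform random threshold. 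One then computes directly: $\pl(\theta) = P(\theta \in \Theta_y(\calS)) = P(\cc_y(\theta) < U^*) = 1 - \cc_y(\theta)$, matching the usual identification $\cc(\theta) = 1-\pl(\theta)$ from Example~\ref{ex:ex1}; and more generally $\bel(A) = P(\{\theta : \cc_y(\theta) < U^*\} \subset A) = \sup\{\alpha : \{\theta : \cc_y(\theta) < \alpha\} \subset A\} = \operatorname{cc-bel}_y(A)$, where the middle equality uses that the level sets are nested and left-continuous in the threshold. Validity of this \IM then follows from the validity Lemma since \eqref{eq:Seq} holds, but it can also be read off directly from the \CC property via the lemma preceding the theorem.

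The main obstacle is measurability and the handling of the event $\{\Theta_y(\calS) \neq \emptyset\}$ together with ties in the level sets. If $\cc_y$ does not attain all values in $[0,1]$, or attains some value on a set of positive "width," the map $u \mapsto \Theta_y(u)$ may be empty or degenerate for some $u$, and the conditioning in \eqref{eq:belief} must be checked to be innocuous — one wants $P(\Theta_y(\calS) \neq \emptyset) = 1$, which holds as long as $\inf_\theta \cc_y(\theta) < U^*$ a.s., i.e. as long as $\cc_y$ has infimum strictly below $1$ (true for any genuine confidence curve, since otherwise no level set is a proper confidence set). I would also need to confirm that $\{\theta : \cc_y(\theta) < \alpha\}$ is measurable for each $\alpha$, which is immediate if $\cc_y(\cdot)$ is assumed measurable (a standing assumption on the association function $a$ throughout the paper), and that $\bel$ as defined agrees with the outer-measure convention when the containment event is non-measurable. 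A secondary, purely cosmetic point is the open-versus-closed convention in $S_\alpha$ and in the definition of the \CC level sets ($\cc(\theta) < \alpha$ versus $\leq \alpha$); I would align these so that the contours $\{\cc_y \leq \alpha\}$ are exactly recovered as the principle assertions $A_{\alpha,y}$ of Theorem~\ref{thm:IMbasis}, thereby making the equivalence with the confidence curve literal rather than merely up to boundary sets.
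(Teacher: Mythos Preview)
Your construction is essentially the paper's: treat $\cc(\theta)$ as the pivotal quantity, take $\calS=[0,U^*]$ so that $\gamma(u)=1-u$, and recover the level sets $\{\theta:\cc(\theta)\leq\alpha\}$ as the principle assertions, with $\pl(\{\theta\})=1-\cc(\theta)$ and $\bel=\operatorname{cc-bel}_y$. For the exact \CC case the two proofs coincide.

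The one place where the paper does something you do not is the conservative case. You set up the association $\cc_y(\theta)=u$ and let $U=\cc_Y(\theta)$ be merely sub-uniform; but the \IM framework in Section~\ref{s:IMdef} requires $U$ to have a \emph{known} distribution free of unknown parameters, and for a conservative \CC the law of $\cc_Y(\theta)$ is unknown and may depend on $\theta$. The paper fixes this by writing $\cc(\theta)=\eta(U)$ with $U\sim U(0,1)$ exactly and $\eta$ the quantile function of $\cc_Y(\theta)$, then treating $\eta$ as an additional nuisance parameter and using a marginal \IM over $\{\theta,\eta\}$; since $\eta(u)\leq u$, one still gets $\Theta_y([0,U^*])=\{\theta:\cc(\theta)\leq U^*\}$ after marginalizing. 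Your stochastic-dominance argument reaches the correct inequality for \eqref{eq:Seq}, but it does so outside the stated \IM setup. A small slip: you write that $u_{Y,\theta}=\cc_Y(\theta)$ is ``stochastically larger'' than $U(0,1)$; from $P_\theta(\cc_Y(\theta)<\alpha)\geq\alpha$ it is stochastically \emph{smaller}, which is in fact what you use two lines later when you (correctly) say $1-\cc_Y(\theta)$ is stochastically bounded below by $U(0,1)$.
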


\begin{proof}
 Denote by $\eta(u)$ the inverse of the distribution function of $\operatorname{cc}_Y(\theta)$. Because by definition of \CC this is sub-uniform, $\eta(u)\leq u$.
 Thus we can consider the association:
 \begin{equation*}
   a(y,\{\theta,\eta\},u)=\cc(\theta)-\eta(u).
 \end{equation*}
 If the \CC is exact, $\eta(u)=u$, otherwise $\eta(u)\leq u$ will be viewed as an unknown parameter.
 
 The random set is taken as $\mathcal S=[0,U^*]$, where $U^*\sim U(0,1)$.
Note that $S_\alpha=[0,\alpha)$, $\gamma(u)=1-u$ and \eqref{eq:Seq} is trivially satisfied.
 
 We will use the marginal \IM with \[\Theta_y(S)=\{\theta : 
a(y,\{\theta,\eta\},u)=0, \mbox{ for some $u\in S$ and $\eta$}\}.\]
 Thus
$\pl(\{\theta\})=P(\theta\in\Theta_y(\mathcal S))=P(\cc(\theta)\leq U^*)=1-\cc(\theta)$. Similarly,
the principle assertions $ A_{\alpha,Y}= \{\theta\,:\, \cc(\theta)\leq\alpha\}$ are exactly the same as the confidence sets implied by the \CC. Consequently, $\operatorname{cc-bel}_y(A)=\bel(A)$ for all $A$.
\end{proof}

Finally, we show that any \IM defines an associated \CC. It also shows that fiducial probability plays a special role in this association.
\begin{theorem}\label{thm:IM2CC}
 Assume \eqref{eq:Seq}. The sets $A_{\alpha,y}$ defined in Theorem~\ref{thm:IMbasis} are $\alpha\cdot 100\%$ confidence sets, and $\cc(\theta)=\inf\{\alpha\,:\,\theta\in A_{\alpha,y}\}$ is a conservative \CC. Additionally, if $P(U^\star\in\manifold{0})=1$ then the actual coverage of $A_{\alpha,y}$ is given by $\fid(A_{\alpha,y})$, where we use the version of $\fid$ introduced in the proof of Theorem~\ref{thm:IMbasis}. Consequently, $\cc'(\theta)=\inf\{\fid( A_{\alpha,y})\,:\,\theta\in A_{\alpha,y}\}\geq\cc(\theta)$ is an exact \CC. 
\end{theorem}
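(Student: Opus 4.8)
The plan is to prove the four claims in the order stated, all resting on the principle nested sets $S_\alpha$ and principle assertions $A_{\alpha,y}=\Theta_y(S_\alpha)$ from Theorem~\ref{thm:IMbasis} (invoking these objects also puts us under its hypothesis that $\mathcal S$ is nested). One preliminary bound powers the first two parts: $P(U^\star\in S_\alpha)\ge\alpha$. This follows by first checking the inclusion $\{u:\gamma(u)>1-\alpha\}\subset S_\alpha$ directly from \eqref{eq:principlenestedsets} --- if $\gamma(u)>1-\alpha$, choose $\beta\in(1-\gamma(u),\alpha)$; a realization of $\mathcal S$ containing $u$ is never a subset of an $S$ with $u\notin S$, so $\gamma(u)\le 1-P(\mathcal S\subset S)$, and hence $P(\mathcal S\subset S)\ge\beta$ forces $u\in S$ (else $\gamma(u)\le 1-\beta<\gamma(u)$), i.e.\ $u\in\bigcap\{S:P(\mathcal S\subset S)\ge\beta\}\subset S_\alpha$ --- and then applying \eqref{eq:Seq} in the form $P(\gamma(U^\star)\le 1-\alpha)\le 1-\alpha$, which gives $P(U^\star\in S_\alpha)\ge P(\gamma(U^\star)>1-\alpha)\ge\alpha$. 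For the confidence-set claim, fix the true $\theta_0$ and the association coupling $(Y,U)$ with $a(Y,\theta_0,U)=0$, $U\sim f_U$; on $\{U\in S_\alpha\}$ we have $\theta_0\in\Theta_Y(U)\subset\Theta_Y(S_\alpha)=A_{\alpha,Y}$, so $P_{\theta_0}(\theta_0\in A_{\alpha,Y})\ge P(U^\star\in S_\alpha)\ge\alpha$.

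For the conservative confidence curve, use that $\alpha\mapsto A_{\alpha,y}$ is increasing: for fixed $Y$ the event $\{\operatorname{cc}_Y(\theta_0)<\alpha\}$ equals the increasing union $\bigcup_{\alpha'<\alpha}\{\theta_0\in A_{\alpha',Y}\}$, so by continuity of measure and the bound above $P_{\theta_0}(\operatorname{cc}_Y(\theta_0)<\alpha)=\lim_{\alpha'\uparrow\alpha}P_{\theta_0}(\theta_0\in A_{\alpha',Y})\ge\alpha$, which is exactly the conservative-\CC\ property. The same monotonicity gives the sandwich $\{\theta:\cc(\theta)<\alpha\}\subset A_{\alpha,y}\subset\{\theta:\cc(\theta)\le\alpha\}$, which is used below.

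Next, the coverage identity under $P(U^\star\in\manifold{0})=1$. Take the version of $\fid$ built in the proof of Theorem~\ref{thm:IMbasis}, for which $\fid(A_{\alpha,y})=P(U^\star\in S_\alpha)$; since $S_\alpha$ does not depend on $y$, this common value is a function of $\alpha$ alone, call it $c(\alpha)$, and $c$ is nondecreasing with $c(\alpha)\ge\alpha$ by the preliminary bound. What remains is to upgrade $P_{\theta_0}(\theta_0\in A_{\alpha,Y})\ge c(\alpha)$ to an equality. With the coupling as above, $\{\theta_0\in\Theta_Y(S_\alpha)\}=\{\{u:a(Y,\theta_0,u)=0\}\cap S_\alpha\ne\emptyset\}$ and $U$ lies in that solution set. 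In the regime of Theorem~\ref{thm:IMbasis} --- typically one where the candidate sets $\Theta_y(u)$ are disjoint for distinct $u$ (e.g.\ a pivotal association $T(Y,\theta)=U$, or a data generating equation $Y=G(U,\theta)$ with $u\mapsto G(u,\theta)$ injective, matching the informative-association setting the \GFD literature favors) --- the solution set is a single point, so the event is exactly $\{U\in S_\alpha\}$ and $P_{\theta_0}(\theta_0\in A_{\alpha,Y})=P(U^\star\in S_\alpha)=c(\alpha)$. Combined with the sandwich, the distribution function $F$ of $\operatorname{cc}_Y(\theta_0)$ under $P_{\theta_0}$ satisfies $F(\alpha^{-})\le c(\alpha)\le F(\alpha)$, so $c=F$ at every continuity point of $F$.

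Finally, for $\cc'$: since each $c(\alpha)\ge\alpha$, $\cc'(\theta)=\inf\{c(\alpha):\theta\in A_{\alpha,y}\}\ge\inf\{\alpha:\theta\in A_{\alpha,y}\}=\cc(\theta)$, so $\cc'\ge\cc$; and by monotonicity of $c$ the same infimum equals $c(\cc(\theta))$ (up to the usual care at jumps of $c$). Hence, assuming $F$ is continuous, $\operatorname{cc}'_Y(\theta_0)=c(\operatorname{cc}_Y(\theta_0))=F(\operatorname{cc}_Y(\theta_0))$ is a probability integral transform of $\operatorname{cc}_Y(\theta_0)$, hence $U(0,1)$, so $\cc'$ is an exact \CC. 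I expect the equality direction of the coverage identity in the third paragraph to be the real obstacle: the bound $\ge c(\alpha)$ comes for free, but the reverse requires ruling out extra coverage contributed by associations with several solutions of $a(Y,\theta_0,u)=0$ --- the \IM set $\Theta_Y(S_\alpha)$ uses all of them while the fiducial event uses only the selected minimizer in \eqref{eq:FIDopt} --- which is precisely the role of $P(U^\star\in\manifold{0})=1$ and the informative-association structure together with the tailored version of $\fid$.
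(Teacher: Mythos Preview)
Your approach is essentially the paper's: coverage is read off from $P_\theta(\theta\in A_{\alpha,Y})=P(U\in S_\alpha)\ge P(\gamma(U)>1-\alpha)\ge\alpha$, and under $P(U^\star\in\manifold{0})=1$ the paper identifies $\fid(A_{\alpha,y})=P(U^\star\in S_\alpha)$ (which is free of $y$) with that coverage, then declares the \CC\ statements ``by definition.'' The only substantive difference is that the paper writes the first relation as an equality without comment, whereas you carefully isolate the reverse inclusion and supply the injectivity-of-$u$ hypothesis needed for it; that is added rigor rather than a different route.
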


\begin{proof}
  We will be using the notation introduced in Section~\ref{s:fiducialIM}.
  Let $\theta$ and $U$ be the parameter and ancillary random variable associated with data $Y$ by \eqref{eq:association}. Then, just as in the proof of Theorem~\ref{thm:fiducial1},
  \begin{equation}\label{eq:IM2CC}
   P_\theta(\theta\in A_{\alpha,Y})=P(U\in S_\alpha)\geq P(\gamma(U)>1-\alpha)\geq\alpha.
  \end{equation}
 Additionally, if $P(U^\star\in\manifold{0})=1$ then  $\fid(A_{\alpha,y})=P(U^\star\in S_\alpha)$, which does not depend on $y$, and as seen in the first equality in \eqref{eq:IM2CC} is the coverage of $A_{\alpha,y}$.
 
 The statements about \CCs follow by definition.
\end{proof}

\section{Discussion} \label{s:discussion}

The results in Section~\ref{s:ccIM} show that the key concept of validity is innate to \CCs. In other words, \IMs are valid when they produce valid \CCs. However,  the property of \CC does not come with a recipe on how to produce it. The big advantage of \IMs is that its three-step recipe provides a systematic way to produce \CCs.

As seen in Theorem~\ref{thm:IM2CC} principle assertions $A_{\alpha,y}$ provide the main link between \IM and \CC. It may be of interest to point out that $A_{\alpha,y}$ is a confidence set that is obtained by de-pivoting of the principle nested set $S_\alpha$ using the association \eqref{eq:association}. Moreover, its coverage is given by the fiducial probability $\fid(A)$. 

In other words, \IM answers an old question, when are (fiducial) credible intervals confidence intervals, i.e., what sets of fiducial probability $\alpha$ are $\alpha\cdot 100\%$ confidence sets? The answer given in Theorem~\ref{thm:IMfiducialAttained} is that technically it could be any measurable set as long as for all the unobserved data $y'$ we would use the corresponding principle assertion $A_{\alpha,y'}$ de-pivoted from the principle nested set $S_\alpha$ from the proof of Theorem~\ref{thm:IMfiducialAttained}. This is of course unrealistic and only set $A_{\alpha,y}$ obtained by de-pivoting a sensible $S_\alpha$ chosen prior to seeing the data will be a set of fiducial probability $\alpha$ that is also $\alpha\cdot 100\%$ confidence sets. 

While most of the results linking \GFD and \IM are proved under the assumption $P(U^\star\in\manifold{0})=1$, the same results remain true if conditioning on $\{U^\star\in\manifold{0}\}$ is effectively conditioning on ancillary statistic \citep{MartinLiu2013b}, e.g., maximal invariant sigma algebra of a group model. If the conditioning in \GFD is not conditioning on ancillary statistic, the relationship between \IM and \GFD is complicated and needs to be studied further. 

In conclusion, \IM can be viewed as fiducial distribution based confidence curves. They provide a powerful argument for anyone seeking fiducial or objective Bayes distributions on parameter space to consider making calculations on the auxiliary $U$ and pay attention to how are the credible sets linked across all potential unobserved data $y'$.

\section{Acknowledgements}
{Jan Hannig's research was supported in part by the National Science Foundation under Grant No. DMS-1916115 and 2113404.}

\bibliographystyle{asa}
\bibliography{IMFiducial}

\end{document}